\documentclass[letterpaper, 10 pt, conference]{ieeeconf} 

\usepackage{cite}
\usepackage{amssymb,amsfonts}
\usepackage{graphicx}
\usepackage{graphics}
\usepackage{tcolorbox}
\usepackage{url}
\usepackage{xcolor}
\usepackage{hyperref}
\usepackage{textcomp}
\usepackage{mathtools, cuted}
\usepackage{lipsum}
\usepackage{stfloats}
\usepackage{multirow}
\usepackage{subcaption}
\usepackage{pdfpages}
\usepackage{amssymb,euscript,psfrag,latexsym,graphicx}
\usepackage{bbm,color,amstext,wasysym,cuted,mathtools, cite}
\usepackage{nopageno}
\usepackage[normalem]{ulem}
\graphicspath{{./},{./figures/}}
\usepackage{dsfont}
\usepackage{ulem}
\usepackage{wrapfig}
\usepackage{algorithm}
\usepackage[noend]{algpseudocode}

\newcommand{\cA}{{\cal A}}
\newcommand{\cC}{{\cal C}}

\newcommand{\cF}{{\cal F}}

\newcommand{\cS}{{\cal S}}
\newcommand{\cL}{{\cal L}}
\newcommand{\cP}{{\cal P}}

\newcommand{\mC}{{\mathbb C}}

\newcommand{\mR}{{\mathbb R}}

\newcommand{\mU}{{\mathbb U}}

\newcommand{\bG}{{\mathbf G}}

\newcommand{\bs}{{\mathbf s}}
\newcommand{\bv}{{\mathbf v}}

\newcommand{\bF}{{\mathbf F}}

\newcommand{\bP}{{\mathbf P}}

\newcommand{\by}{{\mathbf y}}
\newcommand{\bff}{{\mathbf f}}

\newcommand{\bw}{{\mathbf w}}
\newcommand{\bx}{{\mathbf x}}
\newcommand{\bu}{{\mathbf u}}

\newcommand{\bA}{{\mathbf A}}

\newcommand{\bC}{{\boldsymbol{\mathcal C}}}

\newcommand{\bX}{{\mathbf X}}

\newcommand{\bz}{{\mathbf z}}

\newcommand{\bc}{{\mathbf c}}

\newtheorem{theorem}{Theorem}
\newtheorem{definition}{Definition}
\newtheorem{lemma}{Lemma}
\newtheorem{remark}{Remark}
\newtheorem{assumption}{Assumption}

\newtheorem{corollary}{Corollary}

\begin{document}

\title{\LARGE \bf Spectral Koopman Method for Identifying Stability Boundary}

\author{ Bhagyashree Umathe, and Umesh Vaidya  
% \thanks{UV will like to acknowledge the financial support from NSF grant 1932458. All the authors are with the Department of Mechanical Engineering, Clemson University, Clemson, SC, 29631.{\{bumathe,uvaidya\}@clemson.edu}}
}

\maketitle

\thispagestyle{empty}
%%%%%%%%%%%%%%%%%%%%%%%%%%%%%%%%%%%%%%%
%%%%%%%%%%%%   Abstract   %%%%%%%%%%%%%
%%%%%%%%%%%%%%%%%%%%%%%%%%%%%%%%%%%%%%%

\begin{abstract}
The paper is about characterizing the stability boundary of an autonomous dynamical system using the Koopman spectrum. For a dynamical system with an asymptotically stable equilibrium point, the domain of attraction constitutes a region consisting of all initial conditions attracted to the equilibrium point. The stability boundary is a separatrix region that separates the domain of attraction from the rest of the state space. For a large class of dynamical systems, this stability boundary consists of the union of stable manifolds of all the unstable equilibrium points on the stability boundary. We characterize the stable manifold in terms of the zero-level curve of the Koopman eigenfunction. A path-integral formula is proposed to compute the Koopman eigenfunction for a saddle-type equilibrium point on the stability boundary. The algorithm for identifying stability boundary based on the Koopman eigenfunction is attractive as it does not involve explicit knowledge of system dynamics. We present simulation results to verify the main results of the paper.

% In contrast to the more popular but indirect approach based on the Lyapunov function for characterizing stability boundaries, we propose a more direct approach for stability boundary identification.{\color{blue} The stability boundary is characterized by the stable manifold of an unstable equilibrium point lied on the boundary.} We provide a characterization of the stable manifold in terms of the zero-level curve of the Koopman eigenfunction. A path-integral formula is proposed to compute the Koopman eigenfunction for a saddle-type equilibrium point on the stability boundary. The algorithm presented for the identification of stability boundary based on the Koopman eigenfunction is attractive as it does not involve explicit knowledge of system dynamics. We present simulation results to verify the main results of the paper. 
\end{abstract}

% \begin{IEEEkeywords}
%  Koopman Operator, Stability Boundary
% \end{IEEEkeywords}

\section{Introduction}

Characterizing stability and identifying stability boundary of nonlinear dynamical systems is of interest to various problems in engineering and science. Examples include biological systems, biomedicine,  robotics, power systems, power electronics, and economics. Given the significance of the problem, there is a long history and a considerable body of literature on the stability theory of systems. Methods for characterizing stability and stability boundary of equilibrium dynamics can be broadly classified into two different classes: Lyapunov and non-Lyapunov-based methods \cite{chiang2015stability}.

The Lyapunov-based methods rely on constructing the Lyapunov or energy function for stability verification and providing an estimate for the stability boundary in the form of a domain of attraction. Roughly speaking, the domain of attraction is a set of all initial conditions attracted to equilibrium dynamics asymptotically \cite{Khalil_book}. One of the main challenges with the Lyapunov-based methods is that there are no systematic methods to construct these functions, and the stability boundary estimates provided are often conservative. More recent work along these lines includes the development of the maximal Lyapunov function, optimal estimation of stability region based on a given Lyapunov function, Linear Matrix Inequality, and Sum of Squares-based optimization methods for constructing Lyapunov functions \cite{Parrilothesis}. These optimization-based methods suffer from the curse of dimensionality and entail high computational efforts with the increase in the dimension of the state space. With the recent advances in computing and neural networks, there are also efforts to provide approaches based on Deep Neural Networks for constructing Lyapunov function \cite{Bevanda_Diffeomorph}.

The non-Lyapunov-based approach for stability characterization relies on exploiting the geometrical structure in the form of stable and unstable manifolds of the dynamical system for identifying stability boundary. In particular, for a large class of dynamical systems, the stability boundary can be characterized using the stable manifold of an unstable equilibrium point on the stability boundary \cite{Chiang_1988}. 
% This approach hasn't been popular because of the difficulty associated with the computation of stable/unstable manifolds of a dynamical system \cite{}. 

More recently, linear operator theoretic methods involving Perron-Frobenius and Koopman operators have become popular for analyzing and synthesizing nonlinear dynamical systems \cite{housparse,klus2020eigendecompositions,XuMa2019,korda2018convergence,korda2020optimal}. In \cite{Alok_Mecc}, authors have proposed uncertainty propagation using Koopman spectrum-based reachability analysis.
These operators' linearity and spectral properties are used to construct stability certificates in the form of Lyapunov functions and Lyapunov measures. Data-driven approaches are proposed for constructing the Lyapunov function using the Koopman operator. In this paper, we exploit the spectral properties of the Koopman operator to discover a non-Lyapunov-based approach for characterizing the stability boundary. This is made possible by using the fact that the spectral properties of the Koopman operator have an intimate connection with the state space geometry of the dynamical systems. In particular, a dynamical system's stable and unstable manifolds can be obtained as a zero-level curve of the eigenfunctions of the Koopman operator. We exploit this insight towards the development of systematic methods for the identification of stability boundary. There are some distinct advantages of using the Koopman spectrum to develop a non-Lyapunov approach for identifying stability boundary. We will highlight these advantages as part of the main contributions of this paper.

The main contributions of this paper are as follows. We characterize the stability boundary in terms of the principal eigenfunction of the  Koopman operator. In particular, the stability boundary is characterized by the zero-level curve of the eigenfunction corresponding to a positive eigenvalue. A path-integral formula is provided to compute the Koopman principal eigenfunction corresponding to the positive eigenvalue associated with a saddle-type unstable equilibrium point on the stability boundary. The path-integral formula is used to compute the value of eigenfunction at a few sample data points in the state space. We present convergence analysis results for the approximation of the Koopman eigenfunction based on the value of the eigenfunction at a few sample data points. One of the main advantages of the proposed Koopman-based approach for characterizing stability boundary compared to using the Koopman operator for the computation of the Lyapunov function for stability assessment is that we are only required to compute one eigenfunction. This contrasts the computation of all the dominant eigenfunctions used to construct the Lyapunov function. This is because the stability boundary forms a  manifold of co-dimension one, characterized as a zero-level curve of one eigenfunction. Another advantage unique to our proposed computational approach is that we can compute the Koopman eigenfunction directly from the data instead of obtaining it from the approximation of the Koopman operator. This is a tremendous computational advantage for systems with large dimension.

\section{Preliminaries and Notations}\label{section_preilim}
{\bf Notations}: $\mR^n$ denotes the $n$ dimensional Euclidean space. We denote $\cL_{\infty}(\bX), \cC^1(\bX)$ as the space of all essentially bounded real-valued functions and continuously differentiable functions on $\bX \subseteq \mR^n $ respectively. Here, $\bs_t(\bx)$ is the flow of $\dot \bx = \bff(\bx)$, at time $t$ with initial condition $\bx$.

In this paper, we are interested in characterizing the stability boundary using the Koopman eigenfunctions for an autonomous dynamical system of the form.
\begin{align}
    \dot \bx = \bff(\bx), \quad \bx \in \bX \subseteq \mR^n
    \label{eq:sys_dyn}
\end{align}
 where, $\bX$ is assumed to a smooth manifold without boundary or an open subset of $\mR^n$. It is assumed that the vector field is smooth enough to ensure the existence and uniqueness of the solution of the differential equation. We say an equilibrium point $\bx^\star$ is hyperbolic if, $\bA:=\frac{\partial {\bf f}}{\partial \bx}(\bx^\star)$ has no eigenvalues on the $j\omega$ axis.
The following assumption is made on the vector field (\ref{eq:sys_dyn}) in the rest of the paper.
\begin{assumption}\label{assume_system} We assume that    in system (\ref{eq:sys_dyn}),  $\bff(\bx)$ is atleast $\cC^1$ function of $\bx$ with an asymptotically stable equilibrium point at  $\bx_s$, and all the other equilibrium points hyperbolic. 
\end{assumption}

\subsection{Characterization of the Stability Boundary} 
We start with definitions related to the stability analysis. 

\begin{definition}[Stability Region and Boundary]  
For the  system (\ref{eq:sys_dyn}),
with the flow $ \bs_t(\bx)$, let $\bx_s$ be an asymptotically stable equilibrium point. We define the region $\cA(\bx_s)$ as the stability region  or the region of attraction of the stable equilibrium point $\bx_s$ as
\begin{align}
    \cA(\bx_s) := \left \{\bx \in \mR^n : \quad\lim_{t\rightarrow \infty } \bs_t(\bx) = \bx_s \right \}.
\end{align}
The boundary of $\cA(\bx_s)$ is called the stability boundary $\partial \cA(\bx_s)$, which is a separatrix of $\bx_s$.

\end{definition}
% Next, we provide the formal definition for stable and unstable manifolds based on the Hartman and Grobman theorem, which is the building block for the stable manifold theorem \cite{arnold2012geometrical}.

\begin{definition}\label{localmanifolds}[Local Stable and Local Unstable Manifold]  
We consider   
 $\bx^{\star}$  as a hyperbolic equilibrium point, and $D \subset \mR^n $ is a neighborhood of $\bx^{\star}$. If the flow of (\ref{eq:sys_dyn}) is $\bs_t(\bx)$ then the local stable manifold of $\bx^{\star}$ is given by:
\begin{align}
    W^s_l(\bx^{\star}) := \{ \bx \in D : \bs_t(\bx)\rightarrow \bx^{\star}\quad \text{as} \quad t\rightarrow \infty\}
\end{align}
Similarly, the local unstable manifold is given by:
\begin{align}
    W^u_l(\bx^{\star}) := \{ \bx \in D : \bs_t(\bx)\rightarrow \bx^{\star} \quad \text{as} \quad t\rightarrow -\infty\}
\end{align}
\end{definition}

\begin{definition}\label{globalmanifolds}[Stable and Unstable Manifold]\label{definiton_manifold}
    The stable manifold $W^s(\bx^{\star}) $ and Unstable manifold $W^u(\bx^{\star})$ are derived by the local manifolds with the flow $s_t(\bx)$ in backward and forward in time respectively, as:
\begin{align*}
    W^s(\bx^{\star}):=\bigcup_{t\leq 0} \bs_t \left( W^s_l(\bx^{\star})\right),\;
    W^u(\bx^{\star}):=\bigcup_{t\geq 0} \bs_t \left( W^u_l(\bx^{\star})\right).
\end{align*}
\end{definition}

% Following Assumption is made on the vector field throughout the paper. 
% \begin{assumption}\label{as:hyperbolic}
%     The vector field $\bff(\bx)$ is assumed to be at least twice continuously differentiable, $\cC^2(\bX)$ and (\ref{eq:sys_dyn}) and $\bx_s$ is the stable equilibrium point of the dynamical system. 
% \end{assumption}
% Let $A(\bx_s)$ be the domain of attraction of the equilibrium point $\bx_s$ and is defined as follows. 
% \begin{align}
% A(\bx_s):=\{\bx\in \bX : \lim_{t\to \infty}\bs_t(\bx)=\bx_s\}
% \end{align}
% The boundary of the stability region $A(\bx_s)$ is denoted by $\partial A(\bx_s)$ and is called as stability boundary of $\bx_s$. We start with the characterization of the stability boundary that will hold true for a fairly large class of dynamical systems.

% {\color{blue}\begin{definition}\label{definiton_transversality}[Transverality condition]
%     Let $P$ and $Q$ be injectively immersed manifolds in $M$. Then they satisfy the transversality condition if they follow one of the following statements,
%     \begin{enumerate}
%         \item At every point of intersection $x\in P\cap Q$,the tangent spaces of $P$ and $Q$ span the tangent space of $M$ at $x$.
%         \item They do not intersect at all.
%     \end{enumerate}
% \end{definition}}

For the characterization of the stability boundary, the following assumptions are made on the vector field (\ref{eq:sys_dyn}). 
\begin{assumption}\label{assume1}
\begin{enumerate}
\item [A1.] All the equilibrium points on the stability boundary are hyperbolic.
\item [A2.] The stable and unstable manifolds of the equilibrium points on the stability boundary satisfy the transversality assumption (The intersection of two manifolds at point $\bx$ is said to be transversal if the tangent space of the two manifolds at point $\bx$ span the entire space). 
\item [A3.] Every trajectory on the stability boundary approaches one of the equilibrium points at $t\to \infty$. 
\end{enumerate}
\end{assumption}
The following theorem from \cite[Theorem 4.8]{chiang2015stability}  characterizes the stability boundary under the above assumption. 

\begin{theorem}\label{theorem1}
For the nonlinear dynamical system (\ref{eq:sys_dyn}),  satisfying Assumptions \ref{assume_system} and \ref{assume1}, let $\bx_i,\;\;i=1,\ldots N$ be the equilibrium points on the stability boundary $\partial {\cal A}(\bx_s)$ of the stable equilibrium point $\bx_s$. Then
\begin{enumerate}
\item [a)] $\bx_i\in \partial {\cal A}(\bx_s)$ if and only if $W^u(\bx_i)\cap \partial {\cal A}(\bx_s)\neq \emptyset$.
\item [b)] $\partial {\cal A}(\bx_s)=\cup_{i}W^s(\bx_i)$. 
\end{enumerate}
\end{theorem}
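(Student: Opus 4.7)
The plan rests on two topological facts. First, since $\bx_s$ is asymptotically stable, $\cA(\bx_s)$ is open and invariant under the flow $\bs_t$; as $\bs_t$ is a homeomorphism we obtain $\bs_t(\partial\cA(\bx_s))=\partial\cA(\bx_s)$ for every $t$, so $\partial\cA(\bx_s)$ is a closed invariant set. Second, by the stable manifold theorem the global manifolds $W^s(\bx_i)$ and $W^u(\bx_i)$ of each hyperbolic equilibrium are flow-invariant, with orbits in $W^s(\bx_i)$ converging to $\bx_i$ as $t\to\infty$ and orbits in $W^u(\bx_i)$ converging to $\bx_i$ as $t\to-\infty$. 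Both parts of the theorem then follow by tracking orbits against these invariant sets.

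For part (a), the forward implication is essentially by inspection: $\bx_i\in W^u(\bx_i)$ as a fixed trajectory, so $\bx_i\in\partial\cA(\bx_s)$ already witnesses $W^u(\bx_i)\cap\partial\cA(\bx_s)\neq\emptyset$. Conversely, if $\bx\in W^u(\bx_i)\cap\partial\cA(\bx_s)$, then $\bs_{-t}(\bx)\to\bx_i$ as $t\to\infty$; invariance keeps $\bs_{-t}(\bx)\in\partial\cA(\bx_s)$ for every $t\geq 0$, and closedness of $\partial\cA(\bx_s)$ places the limit $\bx_i$ in $\partial\cA(\bx_s)$.

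For part (b) I would treat the two inclusions separately. For $\bigcup_i W^s(\bx_i)\subseteq\partial\cA(\bx_s)$, take $\bx\in W^s(\bx_i)$; then $\bs_t(\bx)\to\bx_i\neq\bx_s$, so $\bx\notin\cA(\bx_s)$. To upgrade this to $\bx\in\partial\cA(\bx_s)$, pick $t$ large so that $\bs_t(\bx)$ lies in a small neighborhood $U$ of $\bx_i$; since $\bx_i\in\partial\cA(\bx_s)$, $U$ contains some $\by\in\cA(\bx_s)$, and by invariance of $\cA(\bx_s)$ together with continuous dependence on initial conditions, $\bs_{-t}(\by)\in\cA(\bx_s)$ sits arbitrarily close to $\bx$, placing $\bx$ in $\overline{\cA(\bx_s)}\setminus\cA(\bx_s)=\partial\cA(\bx_s)$. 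For the reverse inclusion, let $\bx\in\partial\cA(\bx_s)$; boundary invariance gives $\bs_t(\bx)\in\partial\cA(\bx_s)$ for all $t$, and Assumption A3 forces this orbit to converge to some equilibrium, which must lie on $\partial\cA(\bx_s)$ and hence equals one of the $\bx_i$, so $\bx\in W^s(\bx_i)$. The main obstacle is the pullback step upgrading $\bx\notin\cA(\bx_s)$ to $\bx\in\partial\cA(\bx_s)$; the hyperbolicity and transversality in Assumption~\ref{assume1} enter indirectly through the stable manifold theorem and the cited result of \cite{chiang2015stability} to ensure that the finitely many equilibria on $\partial\cA(\bx_s)$ have well-defined stable manifolds that cleanly assemble into the boundary.
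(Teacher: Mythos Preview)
The paper does not prove this theorem; it is simply quoted from \cite[Theorem~4.8]{chiang2015stability}, so there is no in-paper argument to compare against.

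On the merits of your sketch: part (a) as literally written here (intersection with $\partial\cA$ rather than $\cA$) is indeed easy, and your argument for $\partial\cA(\bx_s)\subseteq\bigcup_i W^s(\bx_i)$ via invariance and Assumption~A3 is correct. The gap you flag in the reverse inclusion $W^s(\bx_i)\subseteq\partial\cA(\bx_s)$ is real, however, and your pullback step does not close it. The issue is that the continuity modulus of $\bs_{-t}$ depends on $t$: to force $\bs_{-t}(\by)$ within $\epsilon$ of $\bx$ you need $\|\by-\bs_t(\bx)\|<\delta(t,\epsilon)$, and as $t\to\infty$ this $\delta$ can shrink to zero. Knowing only that some $\by\in\cA(\bx_s)$ lies in a fixed neighborhood of $\bx_i$ gives no control over $\|\by-\bs_t(\bx)\|$, so the pulled-back point need not be close to $\bx$.

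The argument in \cite{chiang2015stability} fills this gap with the $\lambda$-lemma (inclination lemma). Their substantive version of (a) asserts that $W^u(\bx_i)\setminus\{\bx_i\}$ actually meets the \emph{open} set $\cA(\bx_s)$, not just its boundary. Then for $\bx\in W^s(\bx_i)$, any small disk through $\bx$ transverse to $W^s(\bx_i)$ is carried by the forward flow so as to $C^1$-accumulate on $W^u(\bx_i)$, hence eventually enters the open invariant set $\cA(\bx_s)$; invariance of $\cA(\bx_s)$ then places points of $\cA(\bx_s)$ in every neighborhood of $\bx$. This is precisely where hyperbolicity (A1) and transversality (A2) do work beyond merely guaranteeing that the invariant manifolds exist.
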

Assumptions A1 and A2 are generic. Roughly speaking,
we say a property is generic for a class of systems if that property is true for almost all
systems in the class.  Assumption A3 is not a generic property; thus, it needs to be verified. 
The existence of the Lyapunov function provides a sufficient condition for the Assumption
A3 to hold. 
% In fact the Lyapunov function approach is the most popular approach for the computing the estimates for the domain of attraction. So the approach that that we propose for computing the stability boundary for the class of dynamical system satisfying Assumption \ref{assume1} covers a large class of dynamical system of interest. As this class includes the dynamical system for which the most popular Lyapunov-based approach applies.  
In this paper, we pursue a more direct approach for computing the stability boundary for the class of system satisfying Assumption \ref{assume1}. The proposed approach relies on directly computing the stability boundary based on constructing stable manifolds of the equilibrium points on the stability boundary. So, we study the class of system for which the Lyapunov-based approach is used to determine the domain of attraction  as the existence of the Lyapunov function is sufficient for the Assumption \ref{assume1} to be satisfied.

\begin{definition}\label{definition_typeone}
An equilibrium is said to be type-one if exactly one eigenvalue of the linearization of the system at that equilibrium point has positive real part and called source if all the eigenvalues have positive real part. 
\end{definition}
The following theorem  from  \cite[Theorem 4.10]{chiang2015stability} characterizes the structure of the equilibrium on the stability boundary. 
\begin{theorem}
For the nonlinear dynamical system 
(\ref{eq:sys_dyn}) containing two or more stable
equilibrium points, if the system satisfies Assumption \ref{assume1}, then the stability
boundary $\partial \cA(\bx_s)$ of the stable equilibrium point $\bx_s$ must contain at least one type-one equilibrium point. If, furthermore, the stability region $\cA(\bx_s)$ is bounded, then $\partial \cA(\bx_s)$ must
contain at least one type-one equilibrium point and one source.
\end{theorem}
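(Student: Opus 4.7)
The overall plan is to combine Theorem~\ref{theorem1}(b), which expresses $\partial\cA(\bx_s)$ as a union of stable manifolds of boundary equilibria, with two simple ingredients: dimension counting from the stable-manifold theorem for the first half, and a backward-orbit / $\alpha$-limit argument on the compact set $\overline{\cA(\bx_s)}$ for the second half. For part one, I would start from Theorem~\ref{theorem1}(b) so that $\partial\cA(\bx_s)=\bigcup_i W^{s}(\bx_i)$. Because a second stable equilibrium lies outside $\cA(\bx_s)$, this region is a proper nonempty open subset of $\mR^n$, and its boundary has topological dimension $n-1$ and locally separates space. For each hyperbolic equilibrium $\bx_i$ with $k_i$ eigenvalues of positive real part, the stable-manifold theorem gives $\dim W^{s}(\bx_i)=n-k_i$. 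Using Assumption A2, any point of $\partial\cA(\bx_s)$ that admits a local $(n-1)$-dimensional chart must sit in exactly one $W^{s}(\bx_i)$, so that $W^{s}(\bx_i)$ itself has dimension at least $n-1$; this forces $k_i\leq 1$. Hyperbolicity together with $\bx_i\neq \bx_s$ gives $k_i\geq 1$, and we conclude $k_i=1$, i.e., $\bx_i$ is type-one.

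For part two, assume additionally that $\cA(\bx_s)$ is bounded. Pick any $\bx_0\in\cA(\bx_s)\setminus\{\bx_s\}$. The stability region is invariant under the flow both forward and backward in time, because for any $\tau\geq 0$ one has $\bs_t(\bs_{-\tau}(\bx_0))=\bs_{t-\tau}(\bx_0)\to \bx_s$ as $t\to\infty$, so the backward orbit $\{\bs_{-t}(\bx_0):t\geq 0\}$ lies inside the bounded set $\cA(\bx_s)$. Its $\alpha$-limit set $\alpha(\bx_0)$ is therefore a nonempty compact invariant subset of $\overline{\cA(\bx_s)}$. I would argue that $\alpha(\bx_0)$ contains an equilibrium $\bx^{\star}$: it cannot accumulate at $\bx_s$ (since $\bx_s$ repels under the reverse flow), and under the hyperbolicity in A1 together with A3 applied to the boundary, recurrent dynamics in the compact region must accumulate at some equilibrium. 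Because $\bx_0$ is an interior point of $\cA(\bx_s)$ and its backward orbit converges to $\bx^{\star}$, the unstable manifold $W^{u}(\bx^{\star})$ contains an open neighborhood of $\bx_0$, so $\dim W^{u}(\bx^{\star})=n$. Hence every eigenvalue at $\bx^{\star}$ has positive real part, $\bx^{\star}$ is a source, and necessarily $\bx^{\star}\in\partial\cA(\bx_s)$. Part one still applies, so both a type-one equilibrium and a source exist on the stability boundary.

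The main obstacle is the rigour of the two structural arguments. In part one, $\partial\cA(\bx_s)$ is only a topological boundary: decomposing it as a countable union of smooth pieces $W^{s}(\bx_i)$ does not automatically imply that the maximum-dimensional piece is $(n-1)$-dimensional. Justifying this requires showing that on some open subset of $\partial\cA(\bx_s)$ a single $W^{s}(\bx_i)$ already provides a codimension-one local chart, which in turn rests on transversality (A2) preventing lower-dimensional strata from covering any open region of the boundary. In part two, the delicate step is ruling out that $\alpha(\bx_0)$ is an exotic invariant set containing no equilibrium; establishing this in full generality uses the finiteness of hyperbolic equilibria from A1 together with an invariant-manifold/recurrence argument on the compact set $\overline{\cA(\bx_s)}$, and this is essentially the technical core of \cite[Theorem~4.10]{chiang2015stability}.
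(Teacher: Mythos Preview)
The paper does not supply its own proof of this statement: the theorem is quoted directly from \cite[Theorem~4.10]{chiang2015stability} and no argument appears in the text. Your proposal is therefore not competing against anything in the paper; it is an attempted reconstruction of the cited result.

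Your outline for part one is along the right lines, and you correctly flag its soft spot (passing from the topological boundary to a genuine $(n-1)$-dimensional stratum covered by a single $W^s(\bx_i)$). In part two, however, there is a logical error beyond the difficulty you already acknowledge. From ``the backward orbit of $\bx_0$ converges to $\bx^{\star}$'' you obtain only $\bx_0\in W^{u}(\bx^{\star})$, a statement about a single point; it does \emph{not} follow that $W^{u}(\bx^{\star})$ contains an open neighborhood of $\bx_0$, and hence you cannot conclude $\dim W^{u}(\bx^{\star})=n$ from this. A concrete obstruction: take $\bx_0$ on the one-dimensional unstable manifold of a type-one saddle on $\partial\cA(\bx_s)$ where that manifold enters $\cA(\bx_s)$; such $\bx_0$ is an interior point of $\cA(\bx_s)$, its backward orbit converges to the saddle, yet the saddle is not a source. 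The repair is not to fix a single $\bx_0$ but to observe that, granting your $\alpha$-limit claim, the open $n$-dimensional set $\cA(\bx_s)\setminus\{\bx_s\}$ is covered by the finitely many pieces $W^{u}(\bx_j)\cap\cA(\bx_s)$, so by a Baire-category or measure argument at least one $W^{u}(\bx_j)$ must have full dimension $n$; that $\bx_j$ is then a source, and $\bx_j\in\overline{\cA(\bx_s)}\setminus\cA(\bx_s)=\partial\cA(\bx_s)$. This still leaves the step you already isolated---that backward orbits in the bounded invariant set accumulate on equilibria---and that is indeed where the substantive work in \cite{chiang2015stability} lies.
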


\subsection{Koopman Operator and its Spectrum } \label{section_Koopmanspectrum}
 In this section, we provide a brief overview of existing results on the spectral theory of the Koopman operator. For more details on this topic, refer to \cite[Chapter 7]{lasota_yorke}. 
% For the autonomous dynamical system (\ref{eq:sys_dyn}), we define the Koopman operator as follows.

% \begin{remark}\label{remark_generalsys2} From Assumptions \ref{assume_system} and \ref{assumption_hamiltonianeigenvalues} it follows that the vector field $\bF$ is also $\cC^\infty$ function with the origin as the hyperbolic equilibrium point and $\bE:=\frac{\partial \bF}{\partial \bz}(0)$ having all eigenvalues distinct. 
% \end{remark}
 
\begin{definition}[Koopman Operator]  $\mathbb{U}_t :{\cal L}_\infty(\bX)\to {\cal L}_\infty(\bX)$  for dynamical system~\eqref{eq:sys_dyn} is defined as 
\begin{eqnarray}[\mathbb{U}_t \psi](\bx)=\psi(\bs_t(\bx)),\;\;\;\psi(\bx)\in {\cal L}_\infty(\bX) \label{koopman_operator}.
\end{eqnarray}
The infinitesimal generator for the Koopman operator is 
\begin{eqnarray}
\lim_{t\to 0}\frac{(\mathbb{U}_t-I)\psi}{t}=\frac{\partial \psi}{\partial \bx}{\bff }(\bx)=:{\cal K}_{\bff} \psi,\;\;t\geq 0. \label{K_generator}
\end{eqnarray}
\end{definition}
\begin{definition}\label{definition_koopmanspectrum}[Eigenvalues and Eigenfunctions of Koopman] A function $\phi_\lambda(\bx)\in \cC^1(\bX)$  is said to be eigenfunction of the Koopman operator associated with eigenvalue $\lambda$ if
\begin{eqnarray}
[\mU_t \phi_\lambda](\bx)=e^{\lambda t}\phi_\lambda(\bx)\label{eig_koopman}.
\end{eqnarray}
Using the Koopman generator,  (\ref{eig_koopman}) can be written as 
\begin{align}
    \frac{\partial \phi_\lambda}{\partial \bx}{\bff}(\bx)=\lambda \phi_\lambda(\bx)\label{eig_koopmang}.
\end{align}
\end{definition}

where $\frac{\partial \phi_\lambda}{\partial \bx}=(\frac{\partial \phi_\lambda}{\partial x_1},\ldots,\frac{\partial \phi_\lambda}{\partial x_n})$ is a row vector.
% The eigenfunctions and eigenvalues of the Koopman operator enjoys the following property \cite{mezic2020spectrum,budivsic2012applied}. 
% \begin{property}\label{property1}
% Let $\phi_{\lambda_1}$ and $\phi_{\lambda_2}$ are the eigenfunctions of the Koopman operator associated with eigenvalues $\lambda_1$ and $\lambda_2$ respectively. {\color{red}If $\phi_{\lambda_1}^{k_1}\phi_{\lambda_2}^{k_2}\in {\cal C}^1(\bX)$}, for $k_1,k_2\in \mR$, then it is an eigenfunctions of Koopman operator with eigenvalue $k_1\lambda_1+k_2\lambda_2$. 
% \end{property}
% \begin{definition}[Koopman mode decomposition] \label{Koopman_modedecom} Consider a scalar valued function $g: \bX\to \mR$, and assume that the function $g$ can be expanded in terms of Koopman eigenfunctions as follows.
% \begin{align}
%    g(\bz)= \sum_{\bk\in \mathbb{N}^n}^N\bar g_\bk \prod\limits_{i = 1}^{n} \phi_{i}^{k_i}(\bz)\label{decomposition}
% \end{align}
% where $\bk=(k_1,\ldots, k_n)$ and $\bar g_{\bk}$ are the Koopman modes and corresponds to the projection of function $g(\bx)$ on the eigenfunctions, $\phi_{\lambda_1}^{k_1}(\bx),\ldots,\phi_{\lambda_n}^{k_n}(\bx)$. The scalar valued function will propagate under system dynamics as follows
% \begin{align}
% g(\bs_t(\bx))=[\mU_t g](\bx)=\sum_{\bk\in \mathbb{N}^n}^N\bar g_\bk \prod\limits_{i = 1}^{n} \phi_{\lambda_i}^{k_i}(\bx)e^{k_i\lambda_i t}
% % \nonumber\\\sum_{\{k_1,\ldots,k_n\}\in \mN^n}\bar g_{\{k_1,\ldots,k_n\}\in \mN^n} \phi_{\lambda_1}^{k_1}(\bx),\ldots,\phi_{\lambda_n}^{k_n}(\bx)e^{\bar \lambda t},
% \end{align}
% \end{definition}
The spectrum of the Koopman operator, in general, is very complex and could consist of discrete and continuous parts. In this paper, we are interested in approximating the eigenfunctions of the Koopman operator with associated eigenvalues, same as that of the linearization of the nonlinear system at the equilibrium point. With the hyperbolicity assumption on the equilibrium point of  (\ref{eq:sys_dyn}), this part of the spectrum of interest is known to be discrete and well-defined.

% In the following discussion, we provide the summary of the results from \cite{mezic2020spectrum} relevant to this paper and justifying some of the claims made above on the spectrum of the Koopman operator. 

In the following, we briefly summarize the results from \cite{mezic2020spectrum}. Equations (\ref{eig_koopman}) and (\ref{eig_koopmang}) provide a general definition of the Koopman spectrum. However, the spectrum can be defined over finite time or a subset of the state space. 
% The spectrum of interest to us in this paper could be well-defined over the subset of the state space.

\begin{definition}[Open Eigenfunction \cite{mezic2020spectrum}]\label{definition_openeigenfunction}
 Let $\phi_\lambda: \bC\to \mC$, where $\bC\subset \bX$ is not an invariant set. Let $\bx\in  \bC$, and
$\tau \in (\tau^-(\bx),\tau^+(\bx))= I_\bx$, a connected open interval such that $\tau (\bx) \in \bC$ for all  $\tau \in I_\bx$.
If
\[[\mU_\tau \phi_\lambda](\bx) = \phi_\lambda(\bs_\tau(\bx)) =e^{\lambda \tau}  \phi_\lambda (\bx),\;\;\;\;\forall \tau \in I_\bx, \]
then $\phi_\lambda(\bx)$ is called the open eigenfunction of the Koopman operator family $\mU_t$, for $t\in \mR$ with eigenvalue $\lambda$. 
\end{definition}
% Following results are from \cite{mezic2020spectrum,lan2013linearization}.
Based on the above definition, next, we define the concepts of subdomain eigenfunction, principal eigenfunctions, and Koopman spectrum.
If $\bC$ is a proper invariant subset of $\bX$, in which case $I_\bx=\mR$ for every $\bx\in \bC$, then $\phi_\lambda$ is called the subdomain eigenfunction. If $\bC=\bX$ then $\phi_\lambda$ will be the ordinary eigenfunction associated with eigenvalue $\lambda$ as defined in (\ref{eig_koopman}). 
The open eigenfunctions, as defined above, can be extended from $\bC$ to a larger reachable set when $\bC$ is open based on the construction procedure outlined in  [Definition 5.2, Lemma 5.1 \cite{mezic2020spectrum}]. Let $\cP$ be that domain. 
The eigenvalues of the linearization of the system dynamics at the origin, i.e., $\bA$, will form the eigenvalues of the Koopman operator \cite[Proposition 5.8]{mezic2020spectrum}. Our interest will be in constructing the corresponding eigenfunctions defined over the domain $\cP$. We will call these as {\it principal eigenfunctions}. 
% \noindent{\bf 4}. The principal eigenfunctions can be used as a change of coordinates for the linear representation of a nonlinear system and draw a connection to the famous Hartman-Grobman theorem \cite{arnold2012geometrical} on linearization and Poincare normal form. 
The principal eigenfunctions will be defined over a proper subset $\cP$ of the state space $\bX$ (called subdomain eigenfunctions) or over the entire $\bX$ \cite[Lemma 5.1, Corollary 5.1, 5.2, and 5.8]{mezic2020spectrum}. 

% The conjugacy between the linear and nonlinear system provided by the principal eigenfunction through the Hartman-Grobman theorem is exploited in this paper to compute these eigenfunctions. 
The spectrum of the Koopman operator reveals important information about the state space geometry of the dynamical system \cite{mezic2020spectrum, mezic2021koopman}. In particular, we have the following results.

% \vspace{-0.15in}
\begin{corollary}\cite[Corollary 5.10] {mezic2020spectrum}\label{proposition_mainfolds}
Let $\bx^\star$ be the hyperbolic equilibrium point of the system (\ref{eq:sys_dyn}) with $\lambda_1,\ldots, \lambda_p$ the eigenvalues of the linearization of system (\ref{eq:sys_dyn}) at $\bx^\star$. Let $\lambda_1,\ldots,\lambda_u$ be eigenvalues with positive real part with associated open eigenfunctions $\phi_{\lambda_1},\ldots, \phi_{\lambda_u}$ and $\lambda_{u+1},\ldots,\lambda_p$ be eigenvalues with negative real part with associated open eigenfunctions $\phi_{\lambda_{u+1}},\ldots, \phi_{\lambda_p}$ defined over the set $\cP$. Then, the joint level set of the eigenfunctions  
\begin{align}
W_{\cP}^s=\{\bx\in \bX: \phi_{\lambda_1}(\bx)=\ldots=\phi_{\lambda_u}(\bx)=0\},    
\end{align}
forms the stable manifold on $\cP$ and the joint level set of the eigenfunctions
\begin{align}
W_\cP^u=\{\bx\in \bX: \phi_{\lambda_{u+1}}(\bx)=\ldots=\phi_{\lambda_p}(\bx)=0\},
\end{align}
is the unstable manifold on $\cP$ of origin equilibrium point. 
\end{corollary}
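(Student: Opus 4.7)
The plan is to prove both inclusions in each claim using the defining eigenfunction identity and the geometric behaviour of principal eigenfunctions near the equilibrium. The central mechanism is that the relation $\phi_{\lambda}(\bs_t(\bx)) = e^{\lambda t}\phi_{\lambda}(\bx)$ makes the zero-level set of any open eigenfunction flow-invariant on its domain of validity, so questions about asymptotic dynamics reduce to questions about a single instant. I will do the stable manifold case in detail; the unstable case is symmetric under time reversal.

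First I would show $W^s(\bx^\star)\cap \cP \subseteq W^s_\cP$. Fix $\bx$ with $\bs_t(\bx)\to \bx^\star$ as $t\to\infty$. Each principal eigenfunction $\phi_{\lambda_j}$ vanishes at $\bx^\star$ (since by construction it extends the linear functional $c_j^\top(\bx-\bx^\star)$ obtained from a left eigenvector $c_j$ of $\bA$), so continuity forces $\phi_{\lambda_j}(\bs_t(\bx))\to 0$. On the other hand, the eigenfunction identity gives $|\phi_{\lambda_j}(\bs_t(\bx))| = e^{\mathrm{Re}(\lambda_j)\, t}\,|\phi_{\lambda_j}(\bx)|$, which blows up for $j\le u$ unless $\phi_{\lambda_j}(\bx)=0$. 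Hence $\bx\in W^s_\cP$.

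Next I would prove the reverse inclusion $W^s_\cP \subseteq W^s(\bx^\star)\cap \cP$. The construction of principal eigenfunctions in \cite{mezic2020spectrum} guarantees that the gradient of $\phi_{\lambda_j}$ at $\bx^\star$ equals the left eigenvector $c_j$ of $\bA$. Hence, locally near $\bx^\star$, the joint zero set $\{\phi_{\lambda_1}=\cdots=\phi_{\lambda_u}=0\}$ is a smooth $(n-u)$-dimensional submanifold whose tangent space at $\bx^\star$ is the stable subspace $E^s$ of $\bA$. This local zero set is flow-invariant by the eigenfunction identity and tangent to $E^s$, so by uniqueness in the stable manifold theorem it coincides with $W^s_l(\bx^\star)$ in a neighborhood of $\bx^\star$. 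For a general $\bx$ with $\phi_{\lambda_j}(\bx)=0$ for all $j\le u$, the forward trajectory remains in the joint zero set; using the extension procedure of Definition \ref{definition_openeigenfunction} together with the assumption that $\cP$ is forward-reachable under the flow, the orbit eventually enters the local zero set and hence $W^s_l(\bx^\star)$, giving $\bs_t(\bx)\to\bx^\star$.

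The main obstacle is the reverse inclusion: asserting that the joint zero set is \emph{exactly} the stable manifold rather than merely containing it, relies crucially on the nondegeneracy of the $\phi_{\lambda_j}$ at $\bx^\star$, i.e.\ that their differentials at the equilibrium are linearly independent and span the unstable cotangent directions. This in turn rests on the principal-eigenfunction construction that aligns each $\phi_{\lambda_j}$ with a distinct left eigenvector of $\bA$, as developed in the cited lemmas. Once this alignment is in hand, the rest of the proof is essentially bookkeeping: invariance of the zero set under the flow, the local stable manifold theorem, and the fact that the global stable manifold $W^s(\bx^\star)$ is obtained from $W^s_l(\bx^\star)$ by backward flow, mirroring precisely the extension of $\phi_{\lambda_j}$ from the local neighbourhood to all of $\cP$.
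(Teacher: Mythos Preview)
The paper does not supply its own proof of this corollary; it is quoted from \cite[Corollary~5.10]{mezic2020spectrum} and is immediately followed by Remark~\ref{remark_linearKoopman} without any intervening argument. There is therefore no in-paper proof against which to compare your proposal.

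Your sketch is nonetheless the natural route and is in the spirit of how such results are established in the cited reference. The forward inclusion via the growth identity $|\phi_{\lambda_j}(\bs_t(\bx))|=e^{\mathrm{Re}(\lambda_j)t}|\phi_{\lambda_j}(\bx)|$ is correct and clean. For the reverse inclusion you correctly isolate the crucial nondegeneracy input, namely that $d\phi_{\lambda_j}(\bx^\star)$ equals the left eigenvector $\bw_{\lambda_j}$, so that the local joint zero set is an $(n{-}u)$-manifold tangent to $E^s$ and hence, by uniqueness in the stable-manifold theorem, coincides with $W^s_l(\bx^\star)$. The one step that deserves a sharper statement is the claim that the forward orbit of an arbitrary $\bx\in W^s_\cP$ eventually enters the local neighbourhood: this holds because $\cP$ is, by construction, the reachable set of the local domain $\bC$ on which the open eigenfunctions were first defined, but you should make explicit which time direction the extension in \cite[Lemma~5.1]{mezic2020spectrum} uses and why flow-invariance of the joint zero set forces the orbit to land on $W^s_l(\bx^\star)$ rather than merely somewhere in $\bC$.
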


The following remark on the eigenfunctions of the Koopman operator and stable, unstable subspaces corresponding to a linear system is easy to prove. 
\begin{remark}\label{remark_linearKoopman}
The Koopman eigenfunctions corresponding to the eigenvalue $\lambda_j$ of the linear dynamics, $\dot \bx=\bA \bx$, are given by $\phi_{\lambda_j}(\bx)=\bv_j^\top \bx$, where $\bv_j^\top$ is  left eigenvector of $\bA$ with eigenvalue $\lambda_j$ i.e., $\bv^\top_j\bA=\lambda_j \bv^\top_j$. 
% Let $\bW^\top_{u(s)}$ be the left eigenvectors of matrix $\bA$ (assumed to be linearly independent) with positive (negative) real parts. Then following Corollary \ref{proposition_mainfolds}, we can characterize the stable (unstable) subspace as $\bW_{u(s)}^\top \bV_{s(u)}=0$, where $\bV_{s(u)}$ corresponds to  right eigenvectors of $\bA$ with negative (positive) real parts. 
\end{remark}

\section{Stability Boundary computation using Koopman Spectrum}\label{section_main}
This section presents the main results on the computation of stability boundary using the Koopman spectrum. 
% The Theorem \ref{theorem1} provides a characterization of the stability boundary of an asymptotically stable fixed point as the union of stable manifolds of the unstable equilibrium points on the boundary. The Corollary \ref{proposition_mainfolds} provides a characterization of the stable manifolds as the joint zero-level curves of the eigenfunctions corresponding to the eigenvalues with positive real parts. The results from the Theorem \ref{theorem1}  and Corollary \ref{proposition_mainfolds} can be combined to state the following main results for the characterization of stability boundary in terms of Koopman eigenfunction. 
% {\color{red} Is the following theorem correct.. do we need only to consider one unstable eigenvalue in the following construction...?} 

% {\color{magenta} For type-1 $\bx_i$ on the stability boundary, we only need to use one unstable eigenvalue corresponding to each $\bx_i$. Maybe we can specify $\bx_i$ as type-1 equilibrium.}
\begin{theorem} Consider the dynamical system (\ref{eq:sys_dyn}) satisfying Assumption \ref{assume1}. Let $\bx_i$ for $i=1,\ldots, N$ be the type-one equilibrium point on the stability boundary of the stable equilibrium point $\bx_s$. Let $\phi_u^i(\bx)$ be the principal eigenfunctions corresponding to the positive eigenvalues of the linearization $\bA_i:=\frac{\partial \bff}{\partial \bx}(\bx_i)$ with positive real part and defined in the domain $\cP$.  
Then, the stability boundary can be characterized in $\cP$ using the Koopman eigenfunction as the joint zero-level set as follows
% \vspace{-0.2in}
\begin{align}\cS_b:=\bigcup_{i=1}^N\{\bx: \phi^i_u(\bx)=0\}
\end{align}
\end{theorem}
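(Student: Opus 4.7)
The plan is to chain together two results already in the paper: Theorem \ref{theorem1}(b), which expresses the stability boundary as a union of stable manifolds of the equilibrium points lying on it, and Corollary \ref{proposition_mainfolds} (from \cite{mezic2020spectrum}), which identifies each stable manifold on the domain $\cP$ with the joint zero-level set of the principal Koopman eigenfunctions associated to eigenvalues with positive real part. Once both are available, the claimed characterization follows almost immediately.

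First, I would invoke Theorem \ref{theorem1}(b) under Assumptions \ref{assume_system} and \ref{assume1} to write $\partial \cA(\bx_s) = \bigcup_{i} W^s(\bx_i)$, where the $\bx_i$ are the equilibrium points on $\partial \cA(\bx_s)$. Next, for each type-one equilibrium $\bx_i$, I would note that by Definition \ref{definition_typeone} the linearization $\bA_i = \tfrac{\partial \bff}{\partial \bx}(\bx_i)$ has exactly one eigenvalue with positive real part, say $\lambda_u^i$, and all remaining eigenvalues have negative real part. Hyperbolicity (Assumption A1) lets me apply Corollary \ref{proposition_mainfolds} locally at $\bx_i$: the stable manifold on $\cP$ is the joint zero set of the principal eigenfunctions corresponding to eigenvalues with positive real part. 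Because there is only one such eigenvalue for a type-one equilibrium, this joint zero set collapses to a single condition, namely $W^s(\bx_i) \cap \cP = \{\bx \in \cP : \phi_u^i(\bx) = 0\}$.

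Taking the union over $i=1,\ldots,N$ then gives $\partial \cA(\bx_s) \cap \cP = \bigcup_{i=1}^N \{\bx \in \cP : \phi_u^i(\bx) = 0\} = \cS_b$, which is exactly the statement of the theorem.

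The main obstacle is reconciling the scope of the equilibria considered. Theorem \ref{theorem1}(b) includes every equilibrium on the stability boundary, whereas the present statement only uses the type-one ones. I would need to argue that this restriction is without loss: sources and higher-index saddles contribute stable manifolds of dimension strictly less than $n-1$, so they are generically contained in the closure of the union of the $(n-1)$-dimensional stable manifolds of the type-one points, and do not add to the co-dimension one boundary characterized as a zero-level set. A secondary technical point is ensuring that for each $i$ the principal eigenfunction $\phi_u^i$ indeed extends, in the sense of \cite[Lemma 5.1]{mezic2020spectrum}, to a common domain $\cP$ containing the relevant portion of $W^s(\bx_i)$; this is where the assumption in the theorem statement that $\phi_u^i$ is defined on $\cP$ is doing the heavy lifting, so the argument reduces to the two-step combination described above.
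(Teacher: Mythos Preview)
Your proposal is correct and follows essentially the same two-step route as the paper: invoke Theorem~\ref{theorem1}(b) to write $\partial\cA(\bx_s)=\bigcup_i W^s(\bx_i)$, then apply Corollary~\ref{proposition_mainfolds} to identify each $W^s(\bx_i)\cap\cP$ with $\{\bx:\phi_u^i(\bx)=0\}$ and take the union. Your discussion of why the single eigenfunction suffices for type-one equilibria and your flagging of the type-one restriction are in fact more careful than the paper's own proof, which simply combines the two cited results without addressing either point.
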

\begin{proof}The proof follows by applying results of Theorem \ref{theorem1}, where it is shown that the stability boundary is formed as the union of  stable manifold of unstable equilibrium points on the boundary i.e., $\partial {\cal A}(\bx_s)=\cup_{i=1}^N W^s(\bx_i)$. From results of Corollary \ref{proposition_mainfolds}, we know that stable manifold restricted to $\cal P$ is given by $W^s_\cP(\bx_i)=\{\bx: \phi_u^i(\bx)=0\}$. The desired result is then followed by combining these two results.
\end{proof}
The above theorem provides the characterization of stability boundary in terms of Koopman eigenfunction. However, computing the Koopman eigenfunction is a challenge. We now present our main results for the computation of the Koopman eigenfunction corresponding to the eigenvalue with a positive real part. 
% \begin{remark}For characterizing the stability boundary, we will deal with type-one equilibrium point on the stability boundary, i.e., equilibrium point with linearization consisting of exactly one eigenvalue with a positive real part. Furthermore, since the stable manifold of the type-one equilibrium point is obtained as a zero-level curve of eigenfunction with eigenvalue having positive real part we are interested in computing this eigenfunction. 
% \end{remark}
With no loss of generality, we will assume that the unstable equilibrium point on the stability boundary is at the origin.
Since the equilibrium point is hyperbolic (Assumption \ref{assume_system} and \ref{assume1}), the system (\ref{eq:sys_dyn}) admits the following decomposition into linear and nonlinear parts.
\begin{align}
\dot \bx=\bff(\bx)=\bA \bx+ \bF_n(\bx)\label{sys_decompose}
\end{align}
where $\bA:=\frac{\partial \bff}{\partial \bx}(0)$ and $\bF_n(\bx):=\bff(\bx)-\bA\bx$. Our goal is approximate the eigenfunction corresponding to eigenvalue $\lambda$ of $\bA$ with ${\rm Re}(\lambda)>0$. Similar to the decomposition of the system dynamics, the principal eigenfunction also admits a decomposition into linear and nonlinear parts. Let $\phi_\lambda(\bx)$ be the principal eigenfunction for eigenvalue $\lambda$, we have 
\begin{align}
\phi_\lambda(\bx)=\bw_\lambda^\top \bx+h_\lambda(\bx).\label{decompose}
\end{align}
where $\bw_\lambda^\top \bx$ is the linear part and $h_\lambda(\bx)$ is the nonlinear part.  Substituting the above expression of eigenfunction in (\ref{eig_koopmang}) and using (\ref{sys_decompose}), we obtain  $\bw_\lambda^\top \bA=\lambda\bw_\lambda^\top$
i.e., $\bw$ is the left eigenvector of $\bA$ with eigenvalue $\lambda$. We also obtain the linear partial differential equation to be satisfied by $h(\bx)$ as,
\begin{align}
\frac{\partial h_\lambda}{\partial \bx}\bff(\bx)-\lambda h_\lambda(\bx)+\bw^\top \bF_n(\bx)=0\label{pde}
\end{align}
In \cite{deka2023path}, we provided a path-integral formula for the computation of $h_\lambda(\bx)$ and hence the Koopman eigenfunction. The results in \cite{deka2023path} were developed to approximate eigenfunction, assuming the system has a stable equilibrium point. The results from \cite{deka2023path} do not apply to our case, as we are dealing with the unstable or saddle-type equilibrium point on the stability boundary. We start with the following results on the general solution formula for the linear PDE (\ref{pde}). 
% In this paper, we first derive conditions for the path-integral formula to be well-defined for unstable equilibrium point. 
% In the following, we present one of the main results from \cite{} on the path-integral formula for the computation of Koopman eigenfunction. We also present the proof for the sake of completness.

\begin{theorem}\label{theorem_main} The solution formula for the first order linear PDE (\ref{pde}) can be written as 
\begin{align}
h_\lambda(\bx)=e^{-\lambda t} h_\lambda(\bs_t(\bx))+\int_0^t e^{-\lambda t} \bw^\top_\lambda \bF_n(\bs_\tau(\bx))d\tau\label{pde_soultion}
\end{align} 
\end{theorem}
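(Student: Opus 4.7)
The plan is to prove this by the method of characteristics, using the fact that the trajectories $\bs_\tau(\bx)$ of the dynamical system are precisely the characteristic curves of the first-order linear PDE (\ref{pde}). The key observation is that the directional derivative $\frac{\partial h_\lambda}{\partial \bx}\bff(\bx)$ appearing in the PDE is exactly the time derivative of $h_\lambda$ along the flow, so the PDE can be converted to a linear ODE along each trajectory, which is then solved by the standard integrating factor technique.

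First, I would evaluate the PDE (\ref{pde}) at the point $\bs_\tau(\bx)$ for $\tau$ in some interval, which gives
\begin{equation*}
\frac{\partial h_\lambda}{\partial \bx}(\bs_\tau(\bx))\,\bff(\bs_\tau(\bx)) \;=\; \lambda\, h_\lambda(\bs_\tau(\bx)) \;-\; \bw_\lambda^\top \bF_n(\bs_\tau(\bx)).
\end{equation*}
Next, I would recognize the left-hand side as $\frac{d}{d\tau} h_\lambda(\bs_\tau(\bx))$, using the chain rule together with $\frac{d}{d\tau}\bs_\tau(\bx)=\bff(\bs_\tau(\bx))$. This reduces the PDE to the linear ODE
\begin{equation*}
\frac{d}{d\tau}\, h_\lambda(\bs_\tau(\bx)) \;-\; \lambda\, h_\lambda(\bs_\tau(\bx)) \;=\; -\bw_\lambda^\top \bF_n(\bs_\tau(\bx))
\end{equation*}
along the trajectory through $\bx$, with initial condition $h_\lambda(\bs_0(\bx))=h_\lambda(\bx)$.

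Third, I would multiply through by the integrating factor $e^{-\lambda \tau}$ to rewrite the ODE in exact form, $\frac{d}{d\tau}\bigl[e^{-\lambda \tau} h_\lambda(\bs_\tau(\bx))\bigr] = -e^{-\lambda \tau}\bw_\lambda^\top \bF_n(\bs_\tau(\bx))$, and integrate from $0$ to $t$. Rearranging for $h_\lambda(\bx)$ yields the stated formula (\ref{pde_soultion}).

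The proof has no serious technical obstacle: everything reduces to the chain rule and an integrating factor, and the smoothness assumed on $\bff$ in Assumption \ref{assume_system} together with $h_\lambda\in\cC^1$ (as a nonlinear component of the $\cC^1$ eigenfunction $\phi_\lambda$) supplies all regularity needed. The one subtle point I would want to note in passing is the domain of validity of $t$: since the equilibrium at the origin is of saddle type, the flow $\bs_t(\bx)$ exists globally for $\bx$ in a neighborhood of the stable manifold only for a suitable range of $t$, so the formula should be stated along intervals on which $\bs_\tau(\bx)$ stays in the domain $\cP$ where the principal eigenfunction is defined, consistent with the open-eigenfunction framework of Definition \ref{definition_openeigenfunction}. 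I would also flag that the exponent in the integrand of (\ref{pde_soultion}) should naturally read $e^{-\lambda \tau}$ rather than $e^{-\lambda t}$ to match the derivation above.
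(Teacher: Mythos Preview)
Your proposal is correct and follows essentially the same route as the paper: rewrite the PDE as a linear ODE along the flow via the chain rule, apply the integrating factor $e^{-\lambda\tau}$, and integrate from $0$ to $t$. Your additional remarks on the domain of validity and the $e^{-\lambda\tau}$ vs.\ $e^{-\lambda t}$ typo in the integrand are accurate observations that the paper's terse proof does not address.
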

\begin{proof}
The PDE (\ref{pde}) can be written as 
\begin{align}
\frac{d h_\lambda(\bs_t(\bx))}{dt}-\lambda h_\lambda(\bs_t(\bx))+\bw^\top_\lambda \bF_n(\bs_t(\bx))=0.
\end{align}
Multiplying throughout by $e^{-\lambda t}$, we obtain
$\frac{d (e^{-\lambda t} h_\lambda(\bs_t(\bx)))}{dt}+e^{-\lambda t}\bw^\top_\lambda \bff_n(\bs_t(\bx))=0$.
Next, by integrating the above from $0$ to $t$, we obtain the desired result.

% \[\frac{d (e^{-\lambda t} h_\lambda(\bs_t(\bx)))}{dt}+e^{-\lambda t}\bw^\top_\lambda \bff_n(\bs_t(\bx))=0.\]
% Next, we integrate the above from $0$ to $t$, giving us
% \begin{gather*}
% e^{-\lambda t} h_\lambda(\bs_t(\bx))-h_\lambda(\bx)+\int_0^t e^{-\lambda \tau}h_\lambda(\bs_\tau(\bx))d\tau=0,\\
% \implies h_\lambda(\bx)=e^{-\lambda t}h_\lambda(s_t(\bx))+\int_0^t  e^{-\lambda \tau}\bw^\top \bff_n(s_\tau(\bx))d\tau.
% \end{gather*}
\end{proof}
\noindent Before proving the main results we have following remark.

\begin{remark} We are not interested in computing the eigenfunction per se but the zero-level curve of the eigenfunction characterizing the stability boundary. Hence, our interest will be in approximating the Koopman eigenfunction in the region containing the stable manifold. Following the Definitions \ref{localmanifolds} and \ref{globalmanifolds} of local and global stable manifolds,  we notice that the region containing the stable manifold can be obtained by backward propagating a small set around the origin containing the local stable manifold. This observation is crucial to approximate eigenfunction, as the following theorem proves. 
\end{remark}
% \noindent We first prove following lemma.
% \begin{lemma} For any given $\epsilon>0$, there exists a $\delta(\epsilon)>0$ s.t.
% \[|h_\lambda(\bx)|\leq \delta(\epsilon),\;\;\;{\rm for\;all}\;\;\|\bx\|\leq \epsilon\]
% \end{lemma}
\begin{theorem}\label{theorem_saddle} Let $h_\lambda(\bx)$ be continuous function of $\bx$ and $\cal U$ be the set such that for any point $\bx\in \cal U$ there exists a time $t(\bx)$ for which $\bs_t(\bx)\in {\cal U}_\epsilon$, the $\epsilon$ neighborhood of the origin equilibrium point. Let $h_\lambda(\bx)$ be the value of eigenfunction at point $\bx\in \cal U$, then 
\begin{align}
\left|h_\lambda(\bx)-\int_0^{t(\bx)} e^{-\lambda \tau}\bw_\lambda^\top \bF_n(\bs_t(\bx))d\tau\right|\leq \delta(\epsilon)
\end{align}
for some $\delta(\epsilon)>0$ and all $\bx\in \cal U$.
\end{theorem}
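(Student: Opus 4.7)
The approach is to read off the residual directly from the integral representation in Theorem \ref{theorem_main} and show it is small whenever the flow starting at $\bx$ enters a sufficiently small neighborhood of the saddle.

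First, substituting $t = t(\bx)$ into the solution formula (\ref{pde_soultion}) and rearranging isolates exactly the quantity whose absolute value we must bound:
\begin{align*}
h_\lambda(\bx) - \int_0^{t(\bx)} e^{-\lambda \tau}\bw_\lambda^\top \bF_n(\bs_\tau(\bx))\,d\tau = e^{-\lambda t(\bx)}\, h_\lambda\!\left(\bs_{t(\bx)}(\bx)\right).
\end{align*}
The task thus reduces to bounding the single ``boundary'' term on the right-hand side uniformly over $\bx \in {\cal U}$.

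Next, I would bound the two factors separately. The set ${\cal U}$ is built by backward-in-time propagation of ${\cal U}_\epsilon$ (as emphasized in the remark preceding the theorem), so $t(\bx) \geq 0$ for every $\bx \in {\cal U}$; since $\lambda$ is the positive-real-part eigenvalue of $\bA_i$ at the type-one equilibrium, $|e^{-\lambda t(\bx)}| = e^{-\mathrm{Re}(\lambda)\, t(\bx)} \leq 1$. For the other factor, I would use the decomposition $\phi_\lambda(\bx) = \bw_\lambda^\top \bx + h_\lambda(\bx)$ from (\ref{decompose}) together with $\phi_\lambda(0) = 0$ to conclude $h_\lambda(0) = 0$, and then invoke continuity of $h_\lambda$ to set
\begin{align*}
\delta(\epsilon) := \sup_{\by \in {\cal U}_\epsilon} |h_\lambda(\by)|,
\end{align*}
which is finite (continuity on a bounded neighborhood of the origin) and tends to $0$ as $\epsilon \to 0^+$ (by continuity together with $h_\lambda(0)=0$). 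Because $\bs_{t(\bx)}(\bx) \in {\cal U}_\epsilon$ by hypothesis, combining the two bounds gives $\left|e^{-\lambda t(\bx)} h_\lambda(\bs_{t(\bx)}(\bx))\right| \leq \delta(\epsilon)$, which is the claim.

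The main obstacle I anticipate is merely making precise the two auxiliary facts above: (i) that $t(\bx) \geq 0$ and the orbit actually enters ${\cal U}_\epsilon$, which the construction of ${\cal U}$ by backward flow from the local stable manifold guarantees; and (ii) that $h_\lambda(0)=0$, which is an immediate structural consequence of the linear/nonlinear split in (\ref{decompose}) once one notes $\phi_\lambda(0)=0$. With both in hand, the estimate is essentially a one-line corollary of Theorem \ref{theorem_main}, and no further smoothness or uniform-convergence machinery is needed. The only place where one could attempt a sharper statement is in quantifying how fast $\delta(\epsilon) \to 0$; that would require a local Taylor expansion of $h_\lambda$ near the equilibrium, but the current statement only asks for existence of $\delta(\epsilon)$ and so the continuity argument suffices.
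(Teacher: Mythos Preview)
Your proposal is correct and follows essentially the same route as the paper: substitute $t=t(\bx)$ into the solution formula of Theorem~\ref{theorem_main}, isolate the residual $e^{-\lambda t(\bx)}h_\lambda(\bs_{t(\bx)}(\bx))$, bound the exponential by $1$ since $\mathrm{Re}(\lambda)>0$ and $t(\bx)\geq 0$, and control $|h_\lambda(\by)|$ on ${\cal U}_\epsilon$ by continuity. Your write-up is in fact more explicit than the paper's---you spell out $h_\lambda(0)=0$, define $\delta(\epsilon)=\sup_{\by\in{\cal U}_\epsilon}|h_\lambda(\by)|$, and justify $t(\bx)\geq 0$---whereas the paper simply invokes continuity and $\|\by\|\leq\epsilon$ in one line.
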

\begin{proof}
Using result of Theorem \ref{theorem_main}, we know $h_\lambda(\bx)$ satisfies (\ref{pde_soultion}). Consider any point $\bx\in \cal U$, and let $\by=\bs_{t(\bx)}(\bx)\in {\cal U}_\epsilon$. Subsuiting $\by=\bs_{t(\bx)}(\bx)$ in (\ref{pde_soultion}), we obtain
\[h_\lambda(\bx)-\int_0^{t(\bx)}e^{-\lambda \tau}\bw_\lambda^\top \bF_n(\bs_\tau(\bx))d\tau=e^{-\lambda t(\bx)}h_\lambda(\by)\]
Since $\lambda>0$, we have 
\[\left|h_\lambda(\bx)-\int_0^{t(\bx)}e^{-\lambda \tau}\bw_\lambda^\top \bF_n(\bs_\tau(\bx))d\tau\right|\leq |h_\lambda(\by)|\]
The proof then follows by using the contuinity property of $h_\lambda(\bx)$ as $\|\by\|\leq \epsilon$. \ 
\end{proof}
Using the results of the above theorem, we can approximate the eigenfunction for any point $\bx\in \cal U$ containing the stable manifold as 
\begin{align}
\hat \phi_\lambda(\bx)=\bw^\top_\lambda \bx+\int_0^{t(\bx)}e^{-\lambda \tau}\bw^\top_\lambda \bF_n(\bs_\tau(\bx))d\tau.\label{pathintegral}
\end{align}
% However, there are some practical consideration that need to be accounted for towards the development of robust  methods for the computation of the stable manifold. The practical consideration necessary to obtain an robust estimate of eigenfunction within desired degree of accuracy are listed as follows.
% \begin{enumerate}
% \item Finite-data: How many data points $\{\bx_k\}$ should be used for the computation of the eigenfunction ?. Note that the formula (\ref{pathintegral}) can only be used to compute the value of the eigenfunction at point $\bx\in \bX$. With reference to Remark \ref{remark1}, the location for the collection of data samples, $\{\bx_k\}$, in the state space also become relevant. 

% \item Finite time: The path-integral formula in (\ref{pathintegral}) involves integration over infinite time and hence in practice determining the appropriate finite length, $T$, of time integration become relevant. 
% \end{enumerate}

% In our proposed computational framework we will address these two main points. 

\subsection{Computational Framework for Stable Manifold}

We first outline the algorithm to compute a stable manifold of the unstable equilibrium point and then use these steps to obtain the simulation results. \\

\noindent{\bf Algorithm to determine stability boundary}\\
\noindent 1)  Find all the equilibrium points and determine the equilibrium points, $\bx_i^\star$, that are unstable on the stability boundary.\\
\noindent 2) Let $\bA_i=\frac{\partial \bff}{\partial \bx}(\bx_i^\star)$ and  $\bw_\lambda$ be the left eigenvector corresponding to the eigenvalue $\lambda$ with ${\rm Re}(\lambda)>0$. Construct,
% \[{\cal U}_0=\{\bx\in \bX: |\bw_\lambda^\top \bx|\leq \epsilon_0 \cap \|\bx\|\leq \epsilon_1 \}.\]

\[{\cal U}_0=\{\bx\in \bX: \bx^\top \bP\bx \leq \epsilon_1\}\]
for some positive matrix $\bP$ whose $n-1$ eigenvectors span the subspace $\{\bx: \bw_\lambda^\top \bx=0\}$. We will comment on the choice of $\epsilon_1$ and other parameters used in the algorithm later.\\
% } {\color{magenta}Here, the value of $\epsilon_1$ depends on the vector field $\bff(\bx)$. }
% {\color{red} you will have to be more concrete than this.. of course the choice will depend upon lots of things... you need to provide something which can be used to determine this in practice.. }\\
\noindent 3) Propagate the set ${\cal U}_0$ backward in time under the flow $\bs_{-t}(\bx)$ over time interval $[0,T]$ i.e.,
{\small \[{\cal U}=\bigcup_{t=0}^{T}\bs_{-t}({\cal U}_0)\]}
\noindent 4) Let $\{\bx_k\}_{k=1}^L$ be the sample data points uniformly distributed in the set ${\cal U}$. Use the path integral formula (\ref{pathintegral}) to compute the value of the eigenfunction corresponding to $\lambda$ at these points. 
% Let $\phi_\lambda^T(\bx_k)$ be the value of the eigenfunctions at these points evaluated over the finite time $T$ i.e.,
% \begin{align}
%     \phi_{\lambda}^T(\bx_k)=\bw^\top_\lambda \bx+\int_0^T e^{-\lambda t}\bw^\top \bF_n(\bs_t(\bx_k))dt.
% \label{eq:finite_path}
% \end{align}

\noindent 5) Let $\Psi(\bx)=(\psi_1(\bx),\ldots, \psi_N(\bx))^\top$ be the finite choice of basis functions used for the approximation of eigenfunction in the domain $\cal U$. Formulate the following least square optimization problem to determine the approximation of the eigenfunction in domain $\cal U$. 
\begin{align}\min_{\bu\in \mR^N}\|\bG\bu-\bc\|\label{leastsquare}
\end{align}
% \vspace{-0.1in}
\begin{align*}\bc=(\hat\phi_\lambda(\bx_1),\ldots, \hat \phi_\lambda(\bx_L))^\top,\;\bG=(\Psi(\bx_1),\ldots, \Psi(\bx_L))^\top.
\end{align*}
The finite-dimensional approximation of the eigenfunction in the domain $\cal U$ is given by
\begin{align} \tilde\phi_\lambda(\bx)=\Psi(\bx)^\top\bu^\star,\;\;\;\bu^\star=\bG^\dagger \bc\label{optimalsol}
\end{align}
% \noindent 10) Following optimization problem with regularization term can also be proposed to determine the optimal coefficient. 

% \begin{align}
% \min_{\bu\in \mR^N}\|\bG\bu-\bc\|+\gamma\|\bu\|_1
% \end{align}
% where $\gamma>0$ is the regularization coefficient.\\
\noindent 6) The approximation of the stability boundary can then be identified as the zero-level curve of $\hat \phi_\lambda(\bx)$. In particular, we can determine the approximate stability boundary as $\hat \cS_b=\{\bx: |\tilde \phi_\lambda(\bx)|\leq \gamma\}$
for some small $\gamma>0$.

\begin{remark}The choice of $\epsilon_1$ will depend upon the total time used for simulation in backward time and the stable eigenvalue with the smallest real part, $\bar \lambda$. The $\epsilon_1$ neighborhood around the unstable equilibrium point will roughly grow in size as $e^{-\bar \lambda T}\epsilon_1$. The choice of $\epsilon_1$ and $T$ are connected to cover the stable manifold. The choice of $\gamma$ depends on how coarsely we can resolve zero. With the value of eigenfunction computed at different points in the state space, the $\gamma$ will be determined based on the smallest positive and largest negative value of the eigenfunctions at the sampled data.
\end{remark}

% The choice of $\gamma$ depends on the distribution of the data points in the backward flow. If we collect backward flow data in $\Delta t$ interval then $\gamma \propto \alpha\Delta t$ where $0<\alpha<1$.
% {\color{red} same for this provide some guidelines that can be used in practice..}

We provide results on the convergence analysis for the finite-dimensional approximation of the stable manifold. We start with the following assumption on the basis functions.
\begin{assumption}\label{assume_basis} We assume that the basis function, $\Psi(\bx)=(\psi_1(\bx),\ldots, \psi_N(\bx))^\top$, satisfies 
$\mu$ independence property, 
\begin{align}
\mu\{\bx\in \mR^n: \bc^\top\Psi(\bx)=0\}=0,
\end{align}
for all non-zero $\bc\in \mR^N$. The measure $\mu$ is assumed to be equivalent to Lebesgue measure, $m$, on $\bX$ i.e., $\mu(A)=0$ if and only if $m(A)=0$ for any measurable set $A\subset \bX$. The initial conditions $\{\bx_\ell\}_{\ell=1}^L$ are assumed to be independent identically distributed (i.i.d) sampled from  $\mu$. 
\end{assumption}
\noindent Under Assumption \ref{assume_basis}, it can be shown that the matrix $\bG$ in Eq. (\ref{leastsquare}) is invertible. However, due to space constraints we have to omit the proof. Let  $\cF_N\subset \cL_2(\bX,\mu)$ a finite-dimensional subspace spanned by
$\Psi(\bx)=[\psi_1(\bx),\ldots,\psi_N(\bx)]^\top$.
For $\phi\in \cL_2(\bX,\mu)$, we can define the projection of $\phi$ on the closed subspace $\cF_N\subset \cL_2(\bX,\mu)$ as  
\begin{align*}P_N^\mu\phi&=\arg\min_{f\in \cF_N}\|f-\phi\|_{\cL_2(\bX,\mu)}\nonumber\\&=\Psi^\top(\bx)\arg\min_{\bc\in \mR^n}\int_\bX |\bc^\top \Psi(\bx)-\phi(\bx)|^2 d\mu(\bx).
\end{align*}
Let $\hat \mu_L$ be the empirical measure i.e., $\hat \mu_L=\frac{1}{L}\sum_k^L \delta_{\bz_k}$, with $\delta_{\bz_k}$ the Dirac-delta measure. We then have
\begin{align*}P_N^{\hat \mu_L}\phi&=\arg\min_{f\in \cF_N}\|f-\phi\|_{\cL_2(\bX,\hat\mu_L)}\nonumber\\&=\Psi^\top(\bx)\arg\min_{\bc\in \mR^N}\frac{1}{L}\sum_{i=1}^M\left(\bc^\top \Psi(\bx_i)-\phi(\bx_i)\right)^2.\label{projefinite}
\end{align*}
% \begin{lemma}
% Let $|\bw^\top_\lambda \bF_n(\bx)|\leq M_1$ for all $\bx\in \bX$ and some positive constant  $M_1$, where $\bw_\lambda$ is the left eigenvector with positive real eigenvalue $\lambda$, and $\bF_n(\bx)$ is as defined in (\ref{sys_decompose}), then
% \begin{align}
% |\phi_\lambda(\bx)-\phi_\lambda^\top(\bx)|\leq \frac{e^{-\lambda T} M_1}{\lambda}
% \end{align}
% \end{lemma}
% \begin{proof} We have 
% \[|\phi_\lambda-\phi_\lambda^T|=\left|\int_T^\infty e^{-\lambda t}\bF_n(\bs_t(\bx))dt\right|\leq \frac{e^{-\lambda T} M_1}{\lambda}\]
% \end{proof}
\begin{lemma} 
 We have 
 \begin{align}\tilde \phi_\lambda(\bx) \to P_N^\mu \hat \phi_\lambda(\bx)\;\;{\rm as}\;\; L\to \infty
  \end{align}
 %in the limit as $L\to \infty$.
%  \[\phi_\lambda^T(\bx)=\bw^\top \bx+\int_0^T e^{-\lambda t}\bw^\top \bF_n(\bs_t(\bx))dt\] and $\hat \phi_\lambda^T$ is given in (\ref{optimalu}) obtained from the solution of least square problem (\ref{leastsquare}).
\end{lemma}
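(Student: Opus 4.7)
The plan is to identify $\tilde\phi_\lambda$ with the empirical projection $P_N^{\hat\mu_L}\hat\phi_\lambda$ and to pass to the limit as $L\to\infty$ via the strong law of large numbers applied to the i.i.d.\ sample $\{\bx_\ell\}_{\ell=1}^L$ drawn from $\mu$. First I would rewrite the least-squares solution $\bu^\star=\bG^\dagger \bc$ from (\ref{optimalsol}) through the normalized normal equations
\begin{align*}
\left(\tfrac{1}{L}\bG^\top \bG\right)\bu^\star \;=\; \tfrac{1}{L}\bG^\top \bc,
\end{align*}
whose entries are the sample averages $\tfrac{1}{L}\sum_\ell \psi_i(\bx_\ell)\psi_j(\bx_\ell)$ and $\tfrac{1}{L}\sum_\ell \psi_i(\bx_\ell)\hat\phi_\lambda(\bx_\ell)$.

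Next, assuming the products $\psi_i\psi_j$ and $\psi_i\hat\phi_\lambda$ are $\mu$-integrable (which holds since $\Psi$ and $\hat\phi_\lambda$ are bounded on the compact reachable set $\cal U$ on which the samples live), the strong law of large numbers gives, almost surely,
\begin{align*}
\tfrac{1}{L}\bG^\top \bG \;\longrightarrow\; \bM^\mu, \qquad \tfrac{1}{L}\bG^\top \bc \;\longrightarrow\; \bb^\mu,
\end{align*}
where $\bM^\mu_{ij}=\int \psi_i\psi_j\, d\mu$ and $\bb^\mu_i=\int \psi_i\hat\phi_\lambda\, d\mu$ are precisely the Gram matrix and right-hand side defining $P_N^\mu\hat\phi_\lambda$. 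Assumption \ref{assume_basis} then guarantees that $\bM^\mu$ is positive definite, since for every non-zero $\bc\in\mR^N$ the $\mu$-independence property yields $\bc^\top \bM^\mu \bc=\int |\bc^\top\Psi|^2\, d\mu>0$. Continuity of matrix inversion on the open set of invertible symmetric matrices therefore delivers $\bu^\star\to(\bM^\mu)^{-1}\bb^\mu$ a.s., and multiplication by $\Psi(\bx)^\top$ yields $\tilde\phi_\lambda(\bx)\to \Psi(\bx)^\top(\bM^\mu)^{-1}\bb^\mu=P_N^\mu\hat\phi_\lambda(\bx)$ pointwise almost surely, and uniformly on any compact set on which $\|\Psi\|$ is bounded.

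The main obstacle I anticipate is precisely the invertibility step that the paper itself flags as omitted: one must show that $\tfrac{1}{L}\bG^\top\bG$ is non-singular for all sufficiently large $L$ almost surely, with a uniform lower bound on its smallest eigenvalue so that the sequence of inverses converges in operator norm rather than merely along a subsequence. This follows from positive definiteness of the limit $\bM^\mu$ combined with Weyl's perturbation inequality for Hermitian matrices, but it is the step requiring care. A secondary subtlety is pinning down the mode of convergence asserted in the lemma; the natural reading is almost-sure pointwise convergence in $\bx$ (equivalently, $\cL_2(\bX,\mu)$ convergence, given that both sides lie in the finite-dimensional subspace $\cF_N$ after the coefficients converge), which I would make explicit in the final statement.
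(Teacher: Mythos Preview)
Your proposal is correct and follows essentially the same approach as the paper: identify $\tilde\phi_\lambda$ with the empirical projection $P_N^{\hat\mu_L}\hat\phi_\lambda$ and then invoke the law of large numbers on the i.i.d.\ samples to pass to $P_N^\mu\hat\phi_\lambda$. The paper's proof states these two steps in two sentences without elaboration, whereas you have filled in the mechanics (normal equations, Gram-matrix convergence, positive definiteness of $\bM^\mu$ via Assumption~\ref{assume_basis}, continuity of inversion) that the paper leaves implicit.
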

\begin{proof} From (\ref{leastsquare}), (\ref{optimalsol}), and (\ref{projefinite}), it follows that $\tilde \phi_\lambda=P_N^{\hat \mu_L}\hat \phi_\lambda$. Since the sample $\{\bx_k\}$ are assumed to be drawn i.i.d. it follows from law or large numbers that $P_N^{\hat \mu_L}\hat \phi_\lambda \to P_N^{\mu}\hat \phi_\lambda$ as $L\to \infty$. 
\end{proof}

\section{Simulation Results}\label{section_simulation}

\subsection{Bistable Toggle Switch}
In the first example, we consider the dynamics of a genetic toggle switch. The basic dynamics comprise two biological states, representing proteins repressing each other to apply mutual negative feedback \cite{nandanoori2022data}.
% \vspace{-0.07in}
{\small \begin{align}
\dot x_1= \frac{\alpha_1 }{1+x_2^{\beta_1}} - \eta_1 x_1,\;\;
\dot x_2=  \frac{\alpha_2 }{1+x_1^{\beta_2}} - \eta_2 x_2. \label{eq:tg_switch_dyn}
\end{align}}
Here, for the device parameters $\alpha_1,\alpha_2=1, \beta_1=3.55,  \beta_2=3.53,$ and $\eta_1,\eta_2= 0.5,$ (\ref{eq:tg_switch_dyn}) have two stable equilibrium points and one saddle point equilibrium at $(1,1)$. 
% \begin{wrapfigure}{l}{0.22\textwidth}
%     % \vspace{-0.2in}
%     \centering
% \includegraphics[scale=0.4]{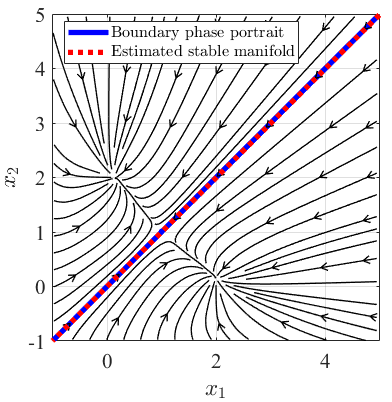}
% % \vspace{-0.2in}
%     \caption{Stability boundary obtained from phase portrait (\textit{blue}) and estimated stable manifold of unstable equilibrium point (\textit{red}).}
%     \label{fig:t_switch}
%     % \vspace{-0.2in}
% \end{wrapfigure}

\begin{figure}[H]
    \centering
\includegraphics[scale=0.5]{}
    \caption{Stability boundary obtained from phase portrait (\textit{blue}) and estimated stable manifold of unstable equilibrium point (\textit{red}).}
    \label{fig:t_switch}
\end{figure}

The linearization of the system at the origin has eigenvalues $(0.3850,-1.3850)$.
The left eigenvector corresponding to the unstable eigenvalue is given by $\bw_{\lambda}=( 0.7061,-0.7081)^\top$.
It is clear from the phase portrait,  Fig. \ref{fig:t_switch}  (\textit{blue} curve), that the stability boundary is almost linear for this example. Therefore, we approximate the eigenfunction with linear only terms, i.e., $\bw_{\lambda}^\top \bx$. The corresponding zero-level curve of linear only eigenfunction is shown in Fig. \ref{fig:t_switch} (\textit{red} curve). From the figure, it is evident that the zero-level curve of eigenfunction matches well with the stability boundary.    
   % advocate the simplicity of the proposed approach, we find the stable manifold by computing the linear eigenfunctions $w_{\lambda}^{\top}\bx$. As shown in Fig. \ref{fig:t_switch} (\textit{red} curve), the obtained stability boundary overlaps with the actual boundary.

% {\color{red}
% I also feel that the figure on the left is conveying any meaningful information...can you also add arrows or direction to the vector field for the phase portrait plot...}
% {\color{red} The figure on the right for the phase portrait is too dense..reduce the number of trajectories so that the stable equilibrium and unstable equilibrium points are clearly visible.. }

\subsection{2-D Speed Control System}
We consider a nonlinear speed control system as follows,
{\small
\begin{align}
    \dot x_1 = x_2,\;\;
    \dot x_2 = -K_d x_2 -x_1 -g x_1^2 \left(  \frac{x_2}{K_d} + x_1 +1\right)
    \label{eq:speed_control}
\end{align}}
Here, $K_d =1$ and $g=6$, the system has three equilibrium points, $(-0.7886, 0)$ and $ (0,0)$ are stable equilibrium and $(-0.21135, 0)$ is saddle (type-1) equilibrium point. In this example, we are interested in finding the stability boundary of $(0,0)$. The stable manifold of $(-0.21135, 0)$, characterized by a zero level set of the unstable eigenfunction of (\ref{eq:speed_control}), provides the stability boundary for $(0,0)$.  The linearization of the system around $(-0.21135, 0)$ gives eigenvalues $(0.4309,
   -1.6990)$. The eigenfunction corresponding to unstable eigenvalue $0.4309$ is the eigenfunction of interest.
   \begin{figure}[t]
    \centering
\includegraphics[scale=0.55]{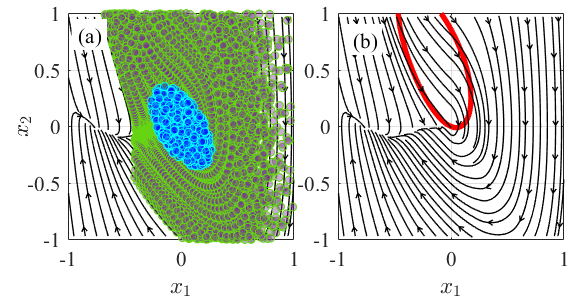}
    \caption{Speed control nonlinear system (a) Set ${\cal U}_0$ (\textit{blue}) and, set $\cal U$ obtained by backward reachable set (\textit{green}) (b) Estimated stable manifold (stability boundary) (\textit{red}).}
    % \vspace{-0.25in}
\label{fig:Data_speed_control}
\end{figure}
We consider a small region around $(-0.21135, 0)$ representing ${\cal U}_0$ with $500$ points  with $\epsilon_1 = 0.2$ as shown in Fig. \ref{fig:Data_speed_control}(a) (\textit{blue}) and obtain the set $\cal U$ by backward propagation of (\ref{eq:speed_control}) over time $[0, 10]$ as illustrated in Fig. \ref{fig:Data_speed_control}(a) (\textit{green}).
We restrict the data points in ${\cal U}$ in domain $[-1, 1]^2$ to compute the eigenfunction values using (\ref{pathintegral}) for $15850$ data points. In Fig. \ref{fig:Data_speed_control}(b), we show the estimated stability boundary on top of the phase portrait. The parameter  $\gamma = 5e^{-5}$. Clearly, the stability boundary follows the phase portrait representing the efficacy of the proposed approach.
% {\color{red} Same for this example.. provide the value of $\gamma$ and $\epsilon_1$ ...and other simulation details...}

\subsection{Two Generator Infinite Bus Power System}
The power system is a classic example of interest to engineering, where determining stability boundary is crucial. For the power system operator, it is of interest to know the {\it critical clearing time} (CCT), which is defined as the maximum allowable time to clear the fault without causing instability. The CCT is the time taken by the trajectory of the on-fault unstable power system dynamics to cross the stability boundary of the post-fault stable power system dynamics. The transient stability analysis of the power system to determine the CCT is then effectively reduced to computing the stability boundary of the post-fault stable power system dynamics. For this example, we consider a three-generator system \cite{chiang2011direct}, with generator three as the reference bus as follows. 
{\small \begin{align}
    \dot \delta_1 &= \omega_1, \; \dot \omega_1 = -\alpha_{1}\sin\delta_1 -\beta_{1}\sin(\delta_1 -\delta_2)-D_1\omega_1 \label{eq:power}\\
     \dot \delta_2 &= \omega_2, \; \dot \omega_2 = -\alpha_{2}\sin\delta_2 -\beta_{2} \sin(\delta_2 -\delta_1)-D_2\omega_1+P_m. \nonumber 
\end{align}}
Here, $\delta_i$ and $\omega_i$ are the generator rotor angle and angular velocity, respectively, for $i^{th}$ generator, $\alpha_i, \beta_i$ are the ratio of generator internal voltage and line impedance with $\alpha_{1}=1$, $\alpha_{2},\beta_{1},\beta_{2}=0.5$, $D_1=0.4$, and $D_2=0.5$.    The system has $6$ type-1 equilibrium points with $(\delta_1,\delta_2)$ values as $(3.24,0.31)$, $(3.04,3.24)$, $(0.03,3.10)$, $(-3.03,0.31)$, $(-3.24,-3.03)$, $(0.03,-3.17)$,
and the post-fault stable equilibrium point is $(0.02,0.06)$ with $\omega_1,\omega_2=0$. We consider ${\cal U}_0$ with $500$ data points and $\epsilon_1 =0.1$ and propagate this backward in time for $5$ sec. to contain $\cal U$. We compute the stable manifold corresponding to each unstable equilibrium point (UEP) as a zero-level set of the unstable eigenfunction as shown in  Fig. \ref{fig:manifols_power} (green)  with a combination of linear and trigonometric basis. The parameter $\gamma=1e^{-3}$. 
\begin{figure}[h]
    \centering    \includegraphics[scale=0.55]{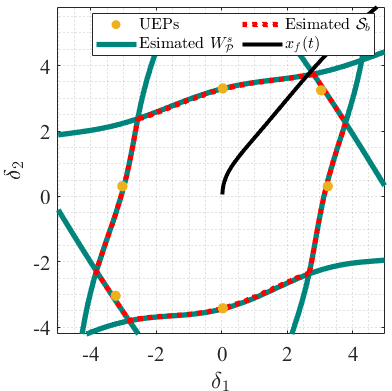}
    \caption{Estimated stability boundary projected on $\delta_1,\delta_2$ plane.}
    % with on fault trajectory}
\label{fig:manifols_power}
% \vspace{-0.2in}
\end{figure}
 The joint level set of all the stable manifolds represents the stability boundary shown in  Fig. \ref{fig:manifols_power} (\textit{dotted red}).
 % {\color{red} I do not understand what these parameters you are talking about...??also the notations from $\delta, \omega$ to $x(k)$ is confusing..} 
 We consider an on-fault case by changing the parameters in (\ref{eq:power}) as 
 % {\color{red} this is still not clear.. perhaps you may want to use parameters in the system equation and then refer them here for the change in the  parameter values..}
 $\alpha_{1},\alpha_{2}=0.01$, $\beta{1}= 0.05$, and $\beta_{2}=0.001$
 shown in  Fig. \ref{fig:manifols_power} (\textit{black}). The CCT for this case using time domain simulation is $43.8$sec, and the CCT obtained from our approach, i.e., the time at which on-fault trajectory first crosses the stable manifold, is $43.7$sec. Therefore, this example shows the efficacy of our approach for practical applications of transient stability analysis in power systems.

% {\color{red} We need to provide more details on the $\gamma$ and $\epsilon_1$ for this example.. and also comment on the fact that the CCT match for the other fault cases as well which we do not report in this paper due to space constraints.. make sure to provide the details on the different parameters in the response to the reveiewer comments..}
% \vspace{-0.1in}           
\section{Conclusions and Discussion}\label{section_conclude}
We have proposed a direct approach for the stability analysis of a dynamical system using the Koopman spectrum. Compared to the existing approach based on the Lyapunov function, our proposed approach is more direct as it relies on the explicit computation of stability boundary using the zero-level curve of Koopman eigenfunction. We also provide a path-integral formula for the calculation of Koopman eigenfunction. One of the challenges with the proposed approach is that it requires the knowledge of unstable equilibrium points on the stability boundary.
% The method has two unique features which distinguish it from other methods. First, characterization of stability boundary as a stable manifold of unstable equilibrium by computing the Koopman eigenfunction with eigenvalue with positive real part. Secondly, we provide an efficient computational framework to compute the  eigenfunction with positive eigenvalue using path integral, which helps in improving the scalability of the algorithm for system with high-dimensional state space.

% \vspace{-0.15in}

\bibliographystyle{IEEEtran}
\bibliography{ref}

\end{document}